\def\Nobs{{\aleph^{(l)}_\text{obs}}}
\def\Nfuse{\aleph^{(l)}_\text{fuse}}
\def\x{\bm{x}}
\def\cx{{{\mathbb{X}}}}
\def\CF{\text{FF}}
\def\LF{\text{LF}}
\newtheorem{rst}{Result}
\newtheorem{lemma}{\bf{Lemma}}
\algnewcommand\Input{\item[\hspace{6pt}\textbf{Input:}]}
\algnewcommand\Output{\item[\hspace{6pt}\textbf{Output:}]}
\algnewcommand\OutputVal{\textbf{output} }
\def\FOC{\bm{C}}
\def\FOA{\bm{A}}
\def\FOE{\bm{E}}
\def\NL{N_L}
\def\NS{N_s}
\def\z{\bm{z}}
\def\y{\bm{y}}
\def\NF{N_f}
\def\Lqo{Y^{(l,m)}}
\def\Lqok{Y^{(l,m)}(k)}
\def\Lok{Z^{(l,m)}(k)}
\def\Lrk{R^{(l,m)}(k)}
\def\Lgk{\bm{g}^{(l,m)}(\x(k))}
\def\Lg{\bm{g}^{(l,m)}}
\def\LLk{\bm{L}^{(l)}(k+1)}
\def\LLpk{\bm{L}^{(l)}(k+1|k)}
\def\LL{\bm{L}^{(l)}}
\def\LLQ{\LL_{\text{Q}}}
\def\LLQk{\LL_{\text{Q}}(k+1)}
\def\Lhk{\bm{h}_i^{(l,m)}(k)}
\def\GLk{\bm{L}^{(\text{G})}(k+1)}
\def\GQLk{\bm{L}_{\text{Q}}^{(\text{G})}(k\!+\!1)}
\def\GQL{\bm{L}_{\text{Q}}^{(\text{G})}}
\def\QC{\bm{C}_{Q}}
\def\JAUXlk{\bm{J}^{(l)}_{\text{AUX}}(k)}
\def\Il{\bm{I}^{(l)}}
\def\PCl{P^{(l)}_c(k\!+\!1)}
\def\JAUXlAll{\bm{J}^{(l)}_{\text{AUX}}(0\colon\!k)}
\title{Distributed Computation of the Conditional PCRLB for
  Quantized Decentralized Particle Filters}
\author{
\authorblockN{Arash Mohammadi$^*$, Amir Asif$^*$, Xionghu Zhong$^+$,
  and A. B. Premkumar$^+$}\\
\authorblockA{$^*$Computer Science and Engineering, York University,
  Email: $\{$marash, asif$\}$@cse.yorku.ca.}\\
\authorblockA{$^+$Computer Engineering, Nanyang Technological
  University, Email: $\{$xhzhong, asannamalai$\}$@ntu.edu.sg.}
}
\begin{document}
\maketitle
%OOOOOOOOOOOOOOOOOOOOOOOOOOOOOOOOOOOOOOOOOOOOOOOOOOOOOOOOO
\begin{abstract}
%OOOOOOOOOOOOOOOOOOOOOOOOOOOOOOOOOOOOOOOOOOOOOOOOOOOOOOOOO
The conditional posterior Cram\'er-Rao lower bound (PCRLB) is an
effective sensor resource management criteria for large,
geographically distributed sensor networks.  Existing algorithms for
distributed computation of the PCRLB (dPCRLB) are based on raw
observations leading to significant communication overhead to the
estimation mechanism.  This letter derives distributed computational
techniques for determining the conditional dPCRLB for quantized,
decentralized sensor networks (CQ/dPCRLB). Analytical expressions for the
CQ/dPCRLB are derived, which are particularly useful
for particle filter-based estimators. The CQ/dPCRLB is compared for
accuracy with its centralized counterpart through Monte-Carlo
simulations.
\end{abstract}
\begin{keywords}
Bayesian Estimation, Distributed Signal Processing, Particle Filters,
PCRLB, Sensor Resource Management.
\end{keywords}
%
%OOOOOOOOOOOOOOOOOOOOOOOOOOOOOOOOOOOOOOOOOOOOOOOOOOOOOOOOO
\section{Introduction} \label{sec:Introduction}
%OOOOOOOOOOOOOOOOOOOOOOOOOOOOOOOOOOOOOOOOOOOOOOOOOOOOOOOOO
%
Recent developments in sensor technologies and advances in
communication systems allow a large number of observation nodes
(sensors) to be deployed in geographically distributed sensor networks
typically configured using the decentralized topology without
employing a global fusion centre.  To elaborate further, the letter
considers a commonly reported decentralized sensor network
topology~\cite{Tharmarasa:2011} with two types of nodes:
(i)~\textit{Sensor nodes}: with limited power used to record
measurements,~and; (ii)~\textit{Local processing nodes}: responsible
for selecting sensors, processing the data locally, and cooperating
distributively with other connected processing nodes to reach a
consensual tracking estimate for the target. In such geographically
distributed sensor networks, limitations in power budget, system
bandwidth, and communication capabilities impose two critical
restrictions.  First, the maximum number of active sensors at a
particular time is constrained.  Second, only quantized observations
are exchanged between the sensors and processing nodes.  Within its
observation neighbourhood, a local processing node, therefore,
activates a small subset of sensors to receive the quantized version
of their observations.

\noindent
\textbf{Motivation}: The Posterior Caram\'er-Rao Lower Bound (PCRLB)
has been used as an effective criteria for adaptive sensor resource
management problems~\cite{Tharmarasa:2011,Zuo:2011} because it
provides a near-optimal bound of the achievable estimator's
performance and can be computed predictively. Further, it is
independent of and not constrained by a specific estimation
methodology.  Existing PCRLB
derivations~\cite{Msechu:2008,Zhou:2010,Duan:2008}
%~\cite{Masazade:2012,Msechu:2008,Zhou:2010,Duan:2008}
using quantized observations are limited to centralized estimation
architectures and have not yet been extended to decentralized
topologies.
The letter addresses this gap and derives the PCRLB for decentralized
state estimation in sensor networks with \textit{quantized}
observations. We refer to the distributed computation of the PCRLB as
dPCRLB.  Our previous work~\cite{Arash:TSP1} derives distributed
expressions for computing the \underline{non-conditional} dPCRLB for
full-order decentralized state estimation.  In~\cite{Arash:SPL}, we
extend our derivations to the \underline{conditional} dPCRLB (where
instead of using statistics for the observation model, actual
observations from the previous iterations are utilized).
Both~\cite{Arash:TSP1} and~\cite{Arash:SPL} consider raw observations
in the dPCRLB derivations, as is also the status quo in the
decentralized estimation literature~\cite{Tharmarasa:2011}.  Such a
setup leads to a large communication overhead between the sensors and
their associated processing node making the system impractical.

\noindent
\textbf{Contributions}: The letter extends the conditional dPCRLB
framework to quantized observations with emphasis on particle filter
estimators. Additional contributions of the letter include:
%
%=================================================
% 		START-Edited By Arash
%=================================================
\noindent
%\text{(1)}~
(a) Both computational and communication complexity
of~\cite{Arash:SPL} are reduced in the proposed conditional dPCRLB
with quantized observations (CQ/dPCRLB).  (b) In~\cite{Arash:SPL}, the
conditional Fisher information matrix (FIM), i.e., the inverse of the
conditional dPCRLB, is expressed as a function of the auxiliary FIM
which is updated distributively at each iteration.  The CQ/dPCRLB updates the
conditional dPCRLB directly without the need of computing the
auxiliary FIM leading to significant communication savings.
%
%\noindent
%\text{(2)}~The CQ/dPCRLB includes a multi-step-ahead prediction setup
%that at iteration $k$ predicts the conditional dPCRLB for iteration
%$k\!+\!m$, ($m>1$), allowing resource management decisions to be made
%well in advance.
%
%=================================================
% 		END-Edited By Arash
%=================================================

%The rest of the paper is organized as follows. Section~\ref{sec:background}
%introduces  the background while Section~\ref{sec:cpcrlb} describes how to compute the local PCRLBs based on raw observations.
%Section~\ref{QdPCRLB} extends the dPCRLB algorithm to quantized local observations,
%Simulation results are presented in Section~\ref{sec:simu}.  Finally,
%Section~\ref{sec:conclusion} concludes the paper.
%OOOOOOOOOOOOOOOOOOOOOOOOOOOOOOOOOOOOOOOOOOOOOOOOOOOOOOOOO
\section{System Description} \label{sec:background}
%OOOOOOOOOOOOOOOOOOOOOOOOOOOOOOOOOOOOOOOOOOOOOOOOOOOOOOOOO
We consider the non-linear dynamical system
\begin{eqnarray} \label{eq1:sec2-1}
 \x(k) = \bm{f}(\x(k-1)) + \bm{\xi}(k),
\end{eqnarray}
where the state vector $\x = [ X_1, X_2, \ldots , X_{n_x}]^T$ and
$\bm{\xi}(k)$ is the global uncertainty in the state model at
iteration $k$.
%=================================================
% 		START-Edited By Arash
%=================================================
Processing node $l$, ($1 \leq l \leq \NF$), is connected to a set of sensor nodes: a subset of which  is
 active at each iteration. The active sensors connected to node $l$ constitute its local observation
neighbourhood $\Nobs$. The total number of active sensors in the network~is
$\NS = \sum_{l=1}^{\NF} |\Nobs|$, where $|\cdot|$ denotes the
cardinality operator.
Sensor $m$ in the observation neighbourhood of
node $l$, i.e., $m \in \Nobs$, makes observation $\Lok$. Instead of
transferring the raw observation, sensor $m$ communicates its
quantized version $\Lqok$ to the fusion node $l$ based on the
following model
%=================================================
% 		END-Edited By Arash
%=================================================
%
\begin{eqnarray}
\Lqok = \bm{Q}^{(l,m)}\big(\underbrace{\Lgk +\bm{\zeta}^{(l,m)}(k)}_{\Lok}\big),
\end{eqnarray}
where $\bm{Q}^{(l,m)}(\cdot)$ is the local quantization operator at
node $l$, and $\Lg(\cdot)$ and $\bm{\zeta}^{(l,m)}(\cdot)$ are,
respectively, the local observation model and uncertainty at sensor
$m$ connected to fusion node $l$.  For simplicity and without loss of
generality, the quantization operators $\bm{Q}^{(l,m)}(\cdot)$ are
considered to be the same across the network (i.e.,
$\bm{Q}^{(l,m)}(\cdot) = \bm{Q}(\cdot)$).  Collectively, the overall
quantized observation vector at node $l$ is denoted~by
\begin{eqnarray}
\y^{(l)}(k) = \{ \Lqok: m \in \Nobs\}, \quad\text{for } (1 \leq l \leq \NF).
\end{eqnarray}
Depending on how many sensors are activated by the processing node
$l$, the dimension of observation vector $\y^{(l)}(k)$ is different at
each processing node.  As for the quantized observations
$\y^{(l)}(k)$, vector $\z^{(l)}(k)$ is the collection of all raw
observations associated with the processing node $l$, i.e.,
\begin{eqnarray}
\z^{(l)}(k) = \{ \Lok: m \in \Nobs\}, \quad\text{for } (1 \leq l \leq \NF).
\end{eqnarray}
%
%In other words, $\y^{(l)}(k)$ is the quantized version of
%$\z^{(l)}(k)$.
We consider an $\NL$-bit quantization scheme, where
node $m$'s quantized observation $\Lqok$ can take any discrete value
between $0$ and $2^{\NL}-1$.  The set of quantization threshold is
denoted by $\bm{q}\! =\! [q_0, q_1, \ldots, q_{2^{\NL}-1}]$ where for
brevity $q_0 \!=\! -\infty$ and $q_{2^{\NL}}\! =\! \infty$.
%=================================================
% 		START-Edited By Arash
%=================================================
The likelihood that $\Lqok$ is at level $q_i$~is denoted by $\Lhk
\triangleq P(\Lqok=q_i|\x(k))$ with
\begin{eqnarray}\label{hdef}
\lefteqn{\Lhk = P\big(q_i \leq \Lok \leq q_{i+1}|\x(k)\big)}\\
&&\!\!\!\!\!\!\!\!\!\!\!\!=\! P\left(\!\big[q_i\!-\!\Lgk\big]\!\leq\!
\bm{\zeta}^{(l,m)}(k) \!\leq\! \big[q_{i+1}\!-\!\Lgk \big]\!\right).
 \nonumber
\end{eqnarray}
%
%=================================================
% 		END-Edited By Arash
%=================================================
Section~\ref{sec:cpcrlb} reviews the local conditional dPCRLB for raw
observations as presented in~\cite{Arash:SPL} with one proposed
modification.
%OOOOOOOOOOOOOOOOOOOOOOOOOOOOOOOOOOOOOOOOOOOOOOOOOOOOOOOOO
\section{Conditional $d\text{PCRLB}$ for Raw Observations} \label{sec:cpcrlb}
%OOOOOOOOOOOOOOOOOOOOOOOOOOOOOOOOOOOOOOOOOOOOOOOOOOOOOOOOO
Based on the conditional PCRLB inequality, the mean square error
associated with the local estimate $\hat{\x}^{(l)}(0\colon\!k\!+\!1)$ of
the state vector at node $l$ is lower bounded~as~follows
\begin{eqnarray}
\mathbb{E}_{\PCl} \big\{\bm{e}^{(l)}(0\!:\!k\!+\!1) (\bm{e}^{(l)}(0\!:\!k\!+\!1))^T\big\} \geq\! [\bm{I}^{(l)}(0\!:\!k\!+\!1)]^{-1},\nonumber
\end{eqnarray}
where $\PCl \!\!\triangleq\!\! P(\x(0\colon\!k),
\z^{(l)}(k\!+\!1)|\z^{(l)}(1\colon\!k))$, $\mathbb{E}\{\cdot\}$ denotes
expectation, and $\bm{e}^{(l)}(0\colon\!k\!+\!1) \triangleq \x(0\colon \!k\!+\!1)-\hat{\x}^{(l)}(0\colon \!k\!+\!1) $ is the estimation
error.
%=================================================
% 		START-Edited By Arash
%=================================================
%Defining the first and second order partial derivatives as
Defining the 1st and 2nd order partial derivatives as
%=================================================
% 		END-Edited By Arash
%=================================================
$\nabla_{\x(k)} = \big[\frac{\partial}{\partial X_1(k)}, \ldots,
  \frac{\partial}{\partial X_{n_x}(k)}\big]^T$ and
$\Delta^{\x(k)}_{\x(k-1)} = \nabla_{\x(k-1)}\nabla_{\x(k)}^T$,
%
%=================================================
% 		START-Edited By Arash
%=================================================
the local \underline{accumulated} conditional FIM $\Il(0\!:\!k\!+\!1)$
corresponds to the state trajectory $\hat{\x}^{(l)}(0\colon \!k\!+\!1)$ from iteration $0$ to $k\!+\!1$ and is given~by
\begin{equation} \label{eq/def/FIM}
\bm{I}^{(l)}(0\!:\!k\!+\!1) \triangleq
\mathbb{E}_{P^{(l)}_c(k+1)}\big\{-\Delta^{\x(0:k+1)}_{\x(0:k+1)} \log P^{(l)}_c(k\!+\!1)\big\}.%One Line
\end{equation}
Another local FIM  is the local \underline{instantaneous} conditional FIM $\LLk$ associated with $\hat{\x}^{(l)}(k\!+\!1)$, which is obtained by taking
the inverse of $(n_x\times n_x$) right-lower block of $[\bm{I}^{(l)}(0\colon\!k\!+\!1)]^{-1}$.  Please refer to
 Appendix~A for differences in the two FIMs.
Node $l$ updates $\LLk$ as follows.
\begin{rst}\label{Varsh}
The instantaneous local FIM $\bm{L}^{(l)}(k+1)$ associated with estimate
$\hat{\x}^{(l)}(k\!+\!1)$ at node $l$ is computed as follows
\begin{eqnarray} \label{ext.crlb.1}\label{qqww1}
\!\!\!\!\!\!\!\!\!\!\!\!\!\!\!\!\!\!&&\bm{L}^{(l)}(k+1)  \approx \big[\bm{B}^{22}(k)\big]^{(l)}\\
\!\!\!\!\!\!\!\!\!\!\!\!\!\!\!\!\!\!&&~~~ -\big[\bm{B}^{21}(k)\big]^{(l)}\Big( \bm{L}^{(l)}(k)\!+\!
\big[\bm{B}^{11}(k)\big]^{(l)}\Big)^{-1}\big[\bm{B}^{12}(k)\big]^{(l)},  \nonumber\\
\!\!\!\!\!\!\!\!\!\!\!\!\!\!\!\!\!\!&&\mbox{where } \big[\bm{B}^{11}(k)\big]^{(l)} \!\!= \mathbb{E} \big\{ \!\!-\!\Delta^{\x(k)}_{\x(k)}
\log P\big(\x(k+1)|\x(k)\big)\big\}, \label{Qcond/d11}\\
\!\!\!\!\!\!\!\!\!\!\!\!\!\!\!\!\!\!&&\big[\bm{B}^{12}(k)\big]^{(l)} \!\!\!=\!  \mathbb{E}\big\{\!\!\!-\!\!\Delta^{\x(k+1)}_{\x(k)}
\log P\big(\x(k+1)|\x(k)\big)\!\big\}\label{Qcond/d12}
\end{eqnarray}
\begin{eqnarray}
\!\!\!\!\!\!\!\!\!\!\!\!\!\!\!\!\!\!&&\mbox{and } \big[\bm{B}^{22}(k)\big]^{(l)} = \mathbb{E} \big\{\!\! -\!\!\Delta^{\x(k+1)}_{\x(k+1)} \log P\big(\x(k+1)|\x(k)\big)\big\}\nonumber\\
\!\!\!\!\!\!\!\!\!\!\!\!\!\!\!\!\!\!&&~~~~~~~~~~+\mathbb{E} \big\{\!\! -\!\!\Delta^{\x(k+1)}_{\x(k+1)}
\log P\big(\z^{(l)}(k\!+\!1)|\x(k\!+\!1)\big)\big\}. \label{qqww3}
\end{eqnarray}
\end{rst}
%=================================================
% 		END-Edited By Arash
%=================================================
%
%=================================================
% 		START-Edited By Arash
%=================================================
\noindent
The derivation of Result~\ref{Varsh} is included in Appendix~B.
In~\cite{Arash:SPL}, $\bm{L}^{(l)}(k\!+\!1)$  is computed recursively from the local instantaneous auxiliary FIM $\JAUXlk$ which is the inverse of $(n_x \times n_x)$ right-lower square block of~the accumulated auxiliary FIM $[\JAUXlAll]^{-1}$. The latter is defined~as
\begin{eqnarray}\label{eq/def/Aux/FIM}
\JAUXlAll \triangleq \mathbb{E}_{P^{(l)}_a(k\!+\!1)}
\big\{-\Delta^{\x(0:k)}_{\x(0:k)} \log P^{(l)}_a(k)\big\}
\end{eqnarray}
with $P^{(l)}_a(k)\! \triangleq\!P(\x(0\colon\!k)|\z^{(l)}(1\colon\!k))$.
The  algorithm proposed in~\cite{Arash:SPL}, therefore, requires decentralized
 fusion  of both the local FIMs and the local auxiliary FIMs, while
Result~\ref{Varsh}
%(which is an approximation of~\cite{Arash:SPL} with the
%auxiliary FIM  $\bm{J}_{\text{AUX}}^{(l)}(k)$ replaced by $\bm{L}^{(l)}(k)$)
eliminates the need for fusing the local instantaneous auxiliary FIMs and, therefore,  cuts the communication overhead by half.
%=================================================
% 		END-Edited By Arash
%=================================================

Distributed computation of the conditional PCRLB requires a recursive expression for the predictive local conditional FIM $\bm{L}^{(l)}(k+1|k)$ which is~similar to~\eqref{qqww1} except $[\bm{B}^{22}(k)]^{(l)}$ is
substituted with $[\bm{B}_p^{22}(k)]^{(l)}$ as follows
\begin{equation} \label{ext.crlb.1}
\big[\bm{B}_p^{22}(k)\big]^{(l)}\!\! =\! \mathbb{E} \big\{\!\! -\!\!\Delta^{\x(k+1)}_{\x(k+1)}
\log P\big(\x(k\!+\!1)|\x(k)\big)\big\}.
\end{equation}
 Having computed the  local FIMs $\LLk$ and the local prediction FIMs
 $\LLpk$ at iteration $k+1$, the next step in the conditional dPCRLB
 is to fuse these local FIMs to compute the global instantaneous conditional FIM
 $\GLk$.  Reference~\cite{Arash:SPL} derives a fusion rule for
 assimilating local conditional FIMs into the global conditional FIM
 when raw observations are available at each local node.
%The  algorithm proposed in~\cite{Arash:SPL} requires decentralized
% computation of the global conditional auxiliary FIM adding extra
% communication overhead.
Section~\ref{QdPCRLB} extends our derivations
 to quantized observations and eliminates the need for fusion of local instantaneous auxiliary FIMs.
%OOOOOOOOOOOOOOOOOOOOOOOOOOOOOOOOOOOOOOOOOOOOOOOOOOOOOOOOO
\section{$\text{CQ}/d\text{PCRLB}$ with Quantized Observations} \label{QdPCRLB}
%OOOOOOOOOOOOOOOOOOOOOOOOOOOOOOOOOOOOOOOOOOOOOOOOOOOOOOOOO
In Proposition~\ref{Varsh}, raw observations $\Lok$ are replaced with
their quantized version $\Lqok$, which results in the quantized
filtering conditional FIM $\LLQ(k\!+\!1)$.  Since terms
$[\bm{B}^{11}(k)]^{(l)}$, $[\bm{B}^{12}(k)]^{(l)}$,
$[\bm{B}^{21}(k)]^{(l)}$ are based on the state model, they remain the
same. Term $[\bm{B}^{22}(k)]^{(l)}$ in Eq.~\eqref{qqww3} is now
computed using the quantized observation as follows
\begin{eqnarray}\label{d22/q}
\!\!\!\!\!\!\!\!\!\!\!\!\!\!\!\!\!\!&&\big[\bm{B}_{\text{Q}}^{22}(k)\big]^{(l)} = \mathbb{E} \big\{\!\! -\!\!\Delta^{\x(k+1)}_{\x(k+1)} \log P\big(\x(k+1)|\x(k)\big)\big\}\\
\!\!\!\!\!\!\!\!\!\!\!\!\!\!\!\!\!\!&&~~~~~~~~~~+\underbrace{\mathbb{E} \big\{\!\! -\!\!\Delta^{\x(k+1)}_{\x(k+1)}
\log P\big(\y^{(l)}(k\!+\!1)|\x(k\!+\!1)\big)}_{\bm{J}(\y^{(l)}(k+1))}\!\!\big\}. \nonumber
\end{eqnarray}
To compute $\bm{J}(\y^{(l)}(k\!+\!1))$, likelihood
$P(\y^{(l)}(k\!+\!1)|\x(k\!+\!1))$ along with the second derivative of
its logarithmic function is needed. Because of quantized observations,
$P(\y^{(l)}(k+1)|\x(k+1))$ transforms into a probability mass
function that is discrete with second derivative replaced by a double
summation as described below.
%
%. Hence, the second derivative of the log likelihood is difficult to compute directly.
%Below, we describe how $\bm{J}(\y^{(l)}(k\!+\!1))$ is computed from quantized observations.
Given the state variables, local observations are assumed independent
such that
\begin{eqnarray}\label{eq:Jyl}
\lefteqn{\bm{J}(\y^{(l)}(k+1)) =
\sum_{m \in \Nobs(k)} \bm{J}\big(Y^{(l,m)}(k+1)\big),}\\
%\end{eqnarray}
%\begin{eqnarray}
\lefteqn{\text{where }\quad \bm{J}(Y^{(l,m)}(k\!+\!1)) }\label{eq:sdLikelihood}\\
&&=\sum_{i=1}^{\NL}-\mathbb{E} \Big\{\delta(\Lqo(k+1)-i)\Delta^{\x(k)}_{\x(k)}
\log(\Lhk)\Big\}\nonumber
\end{eqnarray}
and $\delta(\cdot)$ is the delta function. We note that
$\mathbb{E}\{\delta(\Lqo(k+1)-i)\} \!=\! \Lhk$, where $\Lhk$ was
defined immediately after Eq.~\eqref{hdef} previously and has the
second derivative
\begin{equation} \label{eq:g}
\Delta^{\x(k)}_{\x(k)}
\log(\!\Lhk)\!\!=\!\!\left[\!\!\!\!
\begin{array}{ccc}
\!\frac{\partial^2 \log(\Lhk)}{(\partial(X_1(k)))^2} &\!\!\!\!\!\!\!\!\! \ldots \!\!\!\!&\!\!\!\!\frac{\partial^2 \log(\Lhk)}{\partial(X_1(k))\partial(X_{n_x}(k))}\\
\vdots&\!\!\!\!\!\!\!\!\ddots \!\!\!\!&\!\!\!\!\vdots\\
\frac{\partial^2 \log(\Lhk)}{\partial(X_{n_x}(k))\partial(X_1(k))} &\!\!\!\!\!\! \ldots \!\!\!\!&\!\!\!\!\frac{\partial^2 \log(\Lhk)}{(\partial(X_{n_x}(k)))^2}\\
\end{array}
\!\!\!\!\right]\!\!.
\end{equation}
Under mild regularity conditions, the expected value of
%the second moment of the log likelihood in
\eqref{eq:g} is equal to the variance of its first moment,~i.e.,
\begin{eqnarray}\label{eq:likPartition}
\mathbb{E} \Big\{\frac{\partial^2 \log\big(\Lhk\big)}{\partial(X_{j}(k))\partial(X_{u}(k))}\Big\}  \!\!=\!\! -
\mathbb{E} \Big\{\frac{\frac{\partial \Lhk}{\partial(X_{j}(k))}\frac{\partial \Lhk}{\partial(X_{u}(k))}}{\big(\Lhk\big)^2}\Big\}.
\end{eqnarray}
Eqs.~\eqref{eq:Jyl}-\eqref{eq:likPartition} are used to compute
$[\bm{B}_{\text{Q}}^{22}(k)]^{(l)}$.  Finally, the local quantized
filtering FIM is given by
\begin{eqnarray} \label{qlFIM}
\lefteqn{\LLQ\big(k+1\big)  \approx \big[\bm{B}_{\text{Q}}^{22}(k)\big]^{(l)}}\\
&&\!\!\!\! -\big[\bm{B}^{21}(k)\big]^{(l)}\Big( \LLQ(k)\!\!+\!\!
\big[\bm{B}^{11}(k)\big]^{(l)}\Big)^{-1}\big[\bm{B}^{12}(k)\big]^{(l)}.  \nonumber
\end{eqnarray}
Eq.~\eqref{qlFIM} is derived  by applying the following factorization
\begin{eqnarray}\label{eq:localLQ}
\lefteqn{\!\!\!\!\!\!\!\!\!\!\!\!P(\x(0:k+1), \y^{(l)}(1:k+1)) = P\big(\x(0:k), \y^{(l)}(1:k)\big)\nonumber}\\
&&\times P\big(\x(k+1)| \x(k)\big) P(\y^{(l)}(k\!+\!1)| \x(k\!+\!1))
\end{eqnarray}
to the quantized version of Eq.~\eqref{eq/def/FIM} and then taking the
inverse of the ($n_x\times n_x$) right lower block of
$[\bm{I}_{\text{Q}}^{(l)}(0\!:\!k\!+\!1)]^{-1}$.  The similarity
between Eqs.~\eqref{qqww1} and~\eqref{qlFIM} is intuitively pleasing.
The local predictive FIM $\LLQ(k\!+\!1|k)$ is derived in the similar
manner as~\eqref{qlFIM} with $[\bm{B}^{22}(k)]^{(l)}$ replaced
by~\eqref{ext.crlb.1} %which depends only on the state model.

\vspace{.05in}
\noindent
\textbf{Fusing Local FIMs (CQ/dPCRLB)}: Result~\ref{qdPCRLB} provides
a fusion rule for assimilating the local FIMs with quantized
observations to compute the global quantized FIM.
%
%=================================================
% 		START-Edited By Arash
%=================================================
\begin{rst}\label{qdPCRLB}~The sequence $\{\bm{L}_{\text{Q}}^{(\text{G})}(k+1)\}$
corresponding to the global information submatrix (CQ/dPCRLB) with
quantized local observations follows the following recursion
\begin{eqnarray}
&&\!\!\!\!\!\!\!\!\!\!\!\!\!\!\!\!\!\!\!\!\!\!\!\!\!\GQLk \! \approx\!  \QC^{22}(k)\!\!-\!
\QC^{21}\!(k)\big( \GQL\!(k)\!+\!\QC^{11}\!(k)\big)^{\!-1}\!\QC^{12}\!(k)\!\! \\
&&\!\!\!\!\!\!\!\!\!\!\!\!\!\!\!\!\!\!\!\!\!\!\!\!\!\text{where } \QC^{11}(k) = \mathbb{E}
 \big\{ -\Delta^{\x(k)}_{\x(k)} \log P\big(\x(k+1)|\x(k)\big)\big\}, \label{cond/d11}\label{cond/QFIM}\\
&&\!\!\!\!\!\!\!\!\!\!\!\!\!\!\!\!\!\!\!\!\!\!\!\!\!\QC^{12}(k) =\mathbb{E}
\big\{\!\!-\Delta^{\x(k+1)}_{\x(k)} \log P\big(\x(k+1)|\x(k)\big)\!\big\},\label{cond/Qd12}\\
%\end{eqnarray}
%\begin{eqnarray}
&&\!\!\!\!\!\!\!\!\!\!\!\!\!\!\!\!\!\!\!\!\!
\text{and }\bm{C}_{\text{Q}}^{22}(k) \!\approx\! \sum_{l=1}^{N_f} \LLQk-\sum_{l=1}^{N_f} \LLQ(k+1|k) \label{eq:c22/q}\\
&&~~~~\!
+ \mathbb{E} \big\{ \!\!-\!\!\Delta^{\x(k+1)}_{\x(k+1)} \log P\big(\x(k+1)|\x(k)\big)\big\}.  \nonumber
\end{eqnarray}
\end{rst}
%=================================================
% 		END-Edited By Arash
%=================================================
%=================================================
% 		START-Edited By Arash
%=================================================
\noindent
The proof of Result~\ref{qdPCRLB} is included in Appendix~C.
%=================================================
% 		END-Edited By Arash
%=================================================
%OOOOOOOOOOOOOOOOOOOOOOOOOOOOOOOOOOOOOOOOOOOOOOOOOOOOOOOOO

\vspace{.05in}
\noindent
\textbf{Gaussian Observation Noise}: We derive analytical expressions
for the case when local observations $\Lok$ are zero-mean Gaussian
with variance $\Lrk$, i.e., $\Lok\sim \mathcal{N}(0,\Lrk)$.  The
likelihood that $\Lqok$ is at level $q_i$~is
\begin{eqnarray}\label{eq:hi}
%\lefteqn{\Lhk \triangleq P(\Lqok=q_i|\x(k)) }\\
&&\!\!\!\!\!\!\!\!\!\!\!\!\Lhk \!\!=\!\! \frac{1}{\sqrt{2\pi \Lrk}}\!\!\int_{q_i -\Lgk}^{q_{i\!+\!1}  -\Lgk} \!\!\!\!\!\!\!\exp\big\{ \frac{-t}{2\Lrk}\big\}dt\nonumber\\
&&\!\!\!\!\!\!\!\!\!\!\!= \Phi\left( \frac{q_{i}-\Lgk}{\sqrt{\Lrk}}\right) - \Phi\left(\frac{q_{i+1}-\Lgk}{\sqrt{ \Lrk}} \right)\!,
\end{eqnarray}
where $\Phi(\cdot)$ is the standard cumulative Gaussian distribution.  Based on~\eqref{eq:hi}, each derivative term in~\eqref{eq:likPartition} is represented~as
\begin{eqnarray}\label{eq:ev2}
\lefteqn{\frac{\partial \Lhk}{\partial(X_{u}(k))} = - \frac{\frac{\partial\Lgk}{\partial\x(k)}}{\sqrt{2\pi \Lrk}}\times }\\
&&\!\!\!\!\!\!\!\!\!\!\!\!\!\!\left(\! \exp\Big(\!\frac{\!-(q_{i+1}\!\!-\!\Lgk)^2}{2\Lrk}\!\Big) \!\!-\! \exp\!\Big(\!\frac{-(q_{i}\!\!-\!\Lgk)^2}{2\Lrk}\!\Big)\!\!\right)\!\!.\nonumber
\end{eqnarray}
\noindent
\textit{A. Computation of The Conditional dPCRLB}
%\subsection{Computation of The Conditional dPCRLB}
%OOOOOOOOOOOOOOOOOOOOOOOOOOOOOOOOOOOOOOOOOOOOOOOOOOOOOOOOO
%OOOOOOOOOOOOOOOOOOOOOOOOOOOOOOOOOOOOOOOOOOOOOOOOOOOOOOOOO
%
\begin{figure*}[th]
\centering
\mbox{\subfigure[]{\includegraphics[scale=0.4]{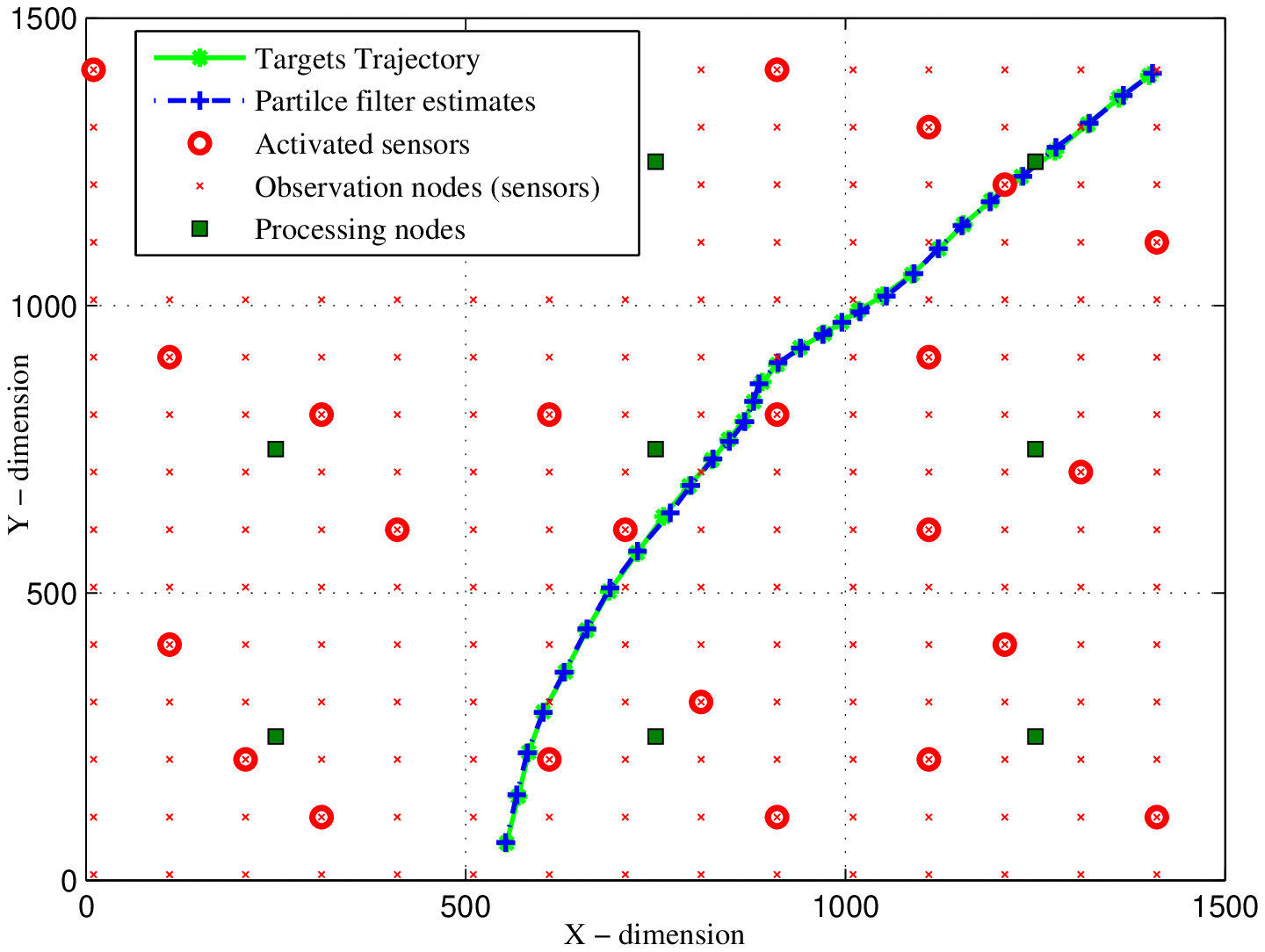}}
\subfigure[]{\includegraphics[scale=0.4]{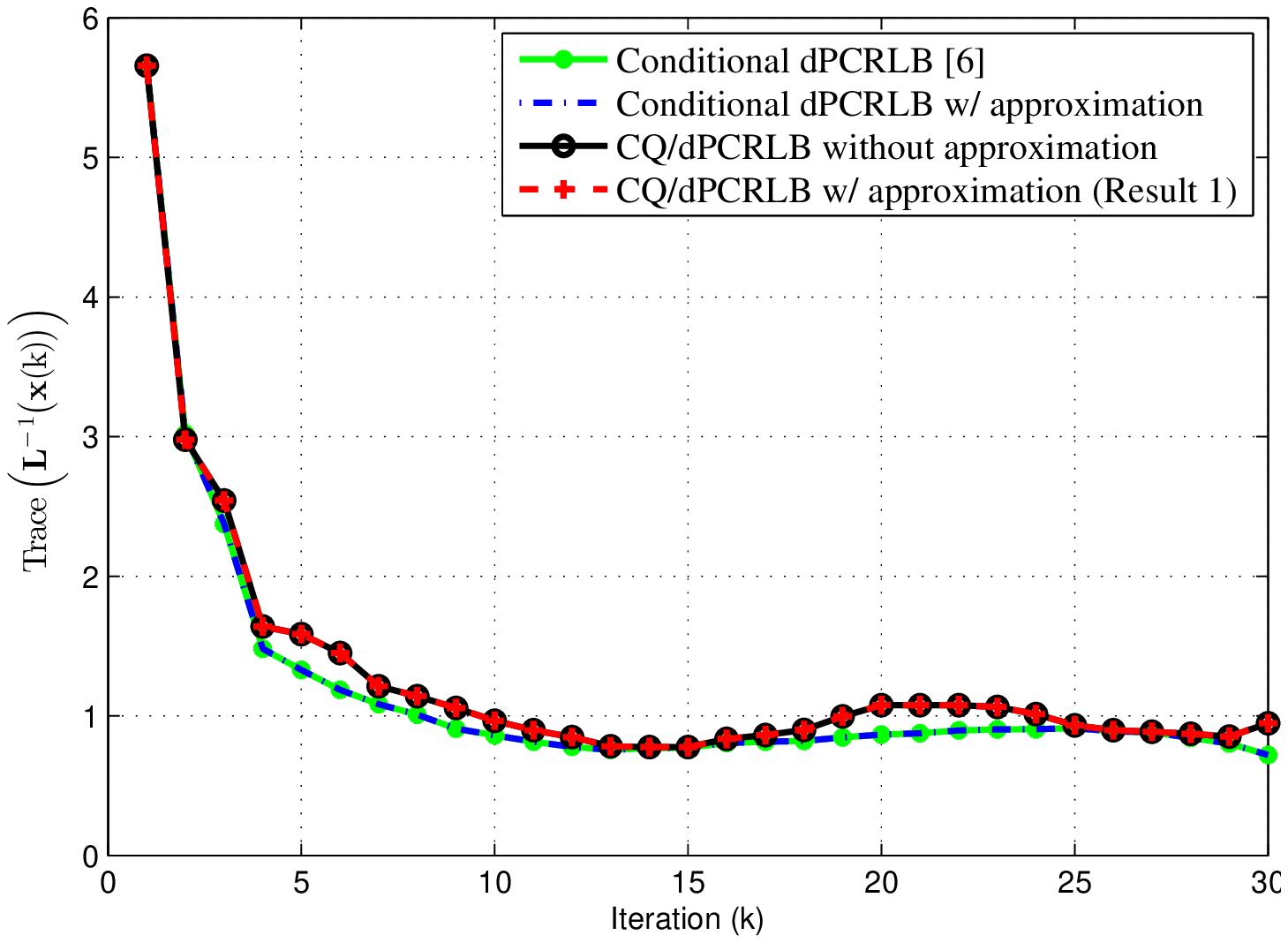}}
\subfigure[]{\includegraphics[scale=0.4]{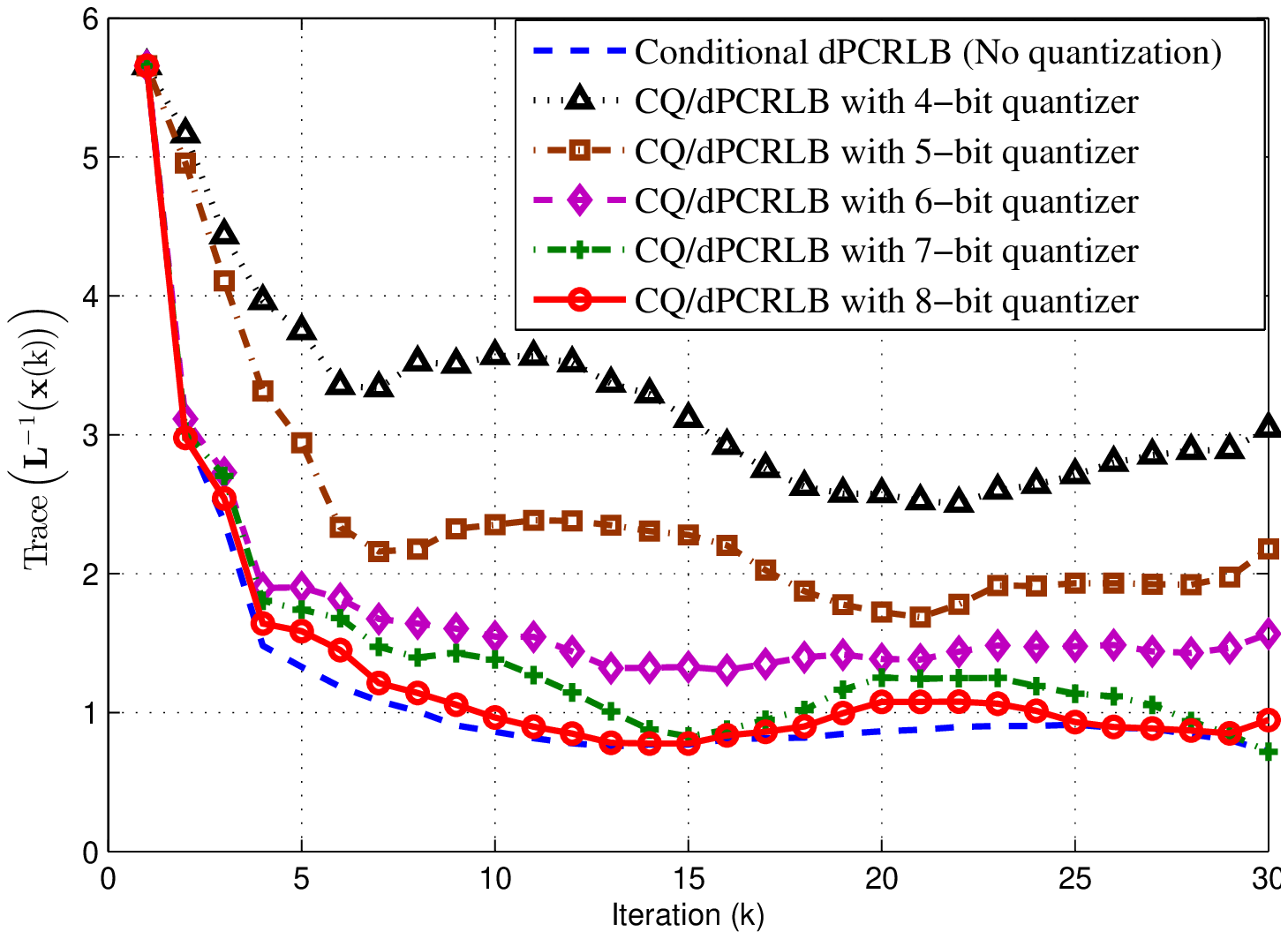}}}
\caption{\label{fig1} $(a)$ A sample decentralized bearing only
  tracking setup. $(b)$ Comparison of the conditional
  dPCRLBs~\cite{Arash:SPL} using raw observations with the CQ/dPCRLBs
  using 8-bit quantized observations. (c) Effect of quantization on
  the CQ/dPCRLB for different (4, 5, 6, 7, and 8 bit) quantization
  levels.}
\vspace{-.1in}
\end{figure*}
%
%OOOOOOOOOOOOOOOOOOOOOOOOOOOOOOOOOOOOOOOOOOOOOOOOOOOOOOOOO

The analytical computation of the expectations in
%=================================================
% 		START-Edited By Arash
%=================================================
Result~\ref{qdPCRLB}
%=================================================
% 		END-Edited By Arash
%=================================================
is not practical and,
therefore, particle filter-based approaches are proposed.  If the
state estimator is based on distributed particle
filters~\cite{Arash:TSP1}, then the same particle set can be used in
the CQ/dPCRLB algorithm.
%=================================================
% 		START-Edited By Arash
%=================================================
An active sensor  communicates its quantized observation to the associated processing  node. The processing nodes themselves communicate  the local conditional FIMs  and statistics of local posteriors  (i.e., local state estimates and their corresponding covariance matrices) to the neighbouring processing nodes which are then fused in a distributed fashion to compute the global state estimate and the global conditional FIM.
%=================================================
% 		END-Edited By Arash
%=================================================
We explain the CQ/dPCRLB algorithm in the
context of the consensus based distributed particle filter
(CF/DPF)~\cite{Arash:TSP1} being used as the state estimator.  The
CF/DPF implements two particle filters at each node:
(i)~Local filter which approximates the local posterior at node $l$
with a set of weighted particles $\{\cx_i^{(l,\LF)}(k),
W_i^{(l,\LF)}\}$, and;
(ii)~Fusion filter which combines the local posteriors to estimate the
global posterior with a second set of particles $\{\cx_i^{(l,\CF)}(k),
W_i^{(l,\CF)}\}$.
All information regarding the observations collected up to time $k$ at
node $l$, are presented in the local particles $\cx_i^{(l,\LF)}(k)$,
while the information available across the network is provided by the
global particles $\cx_i^{(l,\CF)}(k)$.
The CQ/dPCRLB comprises of the following steps:

\noindent
\textit{I. Local FIMs}:
\begin{enumerate}
\item[1.] Eqs.~\eqref{Qcond/d11}-\eqref{Qcond/d12} are computed at
  node $l$ based on Monte-Carlo integration using local particles
  $\cx_i^{(l,\LF)}(k)$.
\item[2.] For computing Eq.~\eqref{d22/q}, first, node $l$ computes
  the predictive particles $\cx_i^{(l,\LF)}(k\!+\!1|k)$ by propagating
  $\cx_i^{(l,\LF)}(k)$ through $P(\x(k\!+\!1)|\x(k))$, and then
  computes Eq.~\eqref{d22/q} using $\cx_i^{(l,\LF)}(k)$ and
  $\cx_i^{(l,\LF)}(k\!+\!1|k)$.
\item[3.] The local FIMs are then computed using Eq.~\eqref{qlFIM}.
\end{enumerate}
\noindent
\textit{II. Global FIM}:
\begin{enumerate}
\item[4.] The expectations in~\eqref{cond/QFIM}-\eqref{eq:c22/q} are
  computed using the global particles $\cx_i^{(l,\CF)}(k)$ to derive
  the FIMs $\bm{C}_{\text{Q}}^{**}(k)$. Eq.~\eqref{eq:c22/q} includes
  summation of local FIMs across the network typically computed using
  the average consensus algorithms~\cite{Arash:SPL} in a decentralized
  fashion.
\item[5.]
%=================================================
% 		START-Edited By Arash
%=================================================
Result~\ref{qdPCRLB}
%=================================================
% 		END-Edited By Arash
%=================================================
is used to compute the global FIM based on the local FIMs computed in
  Step~4.
\end{enumerate}

%OOOOOOOOOOOOOOOOOOOOOOOOOOOOOOOOOOOOOOOOOOOOOOOOOOOOOOOOO
\vspace{.05in}
\noindent
\textit{B. Communication Savings}
%\subsection{Computation of The Conditional dPCRLB}
%OOOOOOOOOOOOOOOOOOOOOOOOOOOOOOOOOOOOOOOOOOOOOOOOOOOOOOOOO

First, the transfer of quantized observation (instead of raw data)
between sensors and associated processing nodes leads to significant
communication savings.  Second, the communication overhead for
computing the global auxiliary FIM from the local auxiliary FIMs
across the network is eliminated in the proposed CQ/dPCRLB algorithm.
%=================================================
% 		START-Edited By Arash
%=================================================
%as the auxiliary FIMs are not needed anymore.
%=================================================
% 		END-Edited By Arash
%=================================================
%=================================================
% 		START-Edited By Arash
%=================================================
With average consensus~\cite{Arash:TSP1}, the second savings is of
$\text{O}(n_x|\Nfuse|N_c)$ (i.e., the communication complexity reduces by half), where $n_x$ is number of states,
$|\Nfuse|$ the number of processing nodes in the neighbourhood of
processing node $l$, and $N_c$ the number of consensus iterations.
%=================================================
% 		START-Edited By Arash
%=================================================
%=================================================
% 		START-Edited By Arash
%=================================================
For complexity analysis of the CF/DPF refer to~\cite{Arash:TSP1}.
The CQ/dPCRLB can  be further extended to communicate quantized versions of the local state statistics (quantized local tracks~\cite{Ruan:2005}) and  local FIMs between neighbouring processing nodes during the fusion filter stage which will be considered in future work.
%=================================================
% 		END-Edited By Arash
%=================================================

%OOOOOOOOOOOOOOOOOOOOOOOOOOOOOOOOOOOOOOOOOOOOOOOOOOOOOOOOO
\section{Simulation} \label{sec:simu}
%OOOOOOOOOOOOOOOOOOOOOOOOOOOOOOOOOOOOOOOOOOOOOOOOOOOOOOOOO
A decentralized bearing-only tracker with nonlinear clockwise
coordinate turn state model~\cite{Arash:TSP1} and observation model
\begin{eqnarray}
  \z^{(l,m)}(k) = {\text{atan}
\left[\frac{X(k)-X^{(l,m)}}{Y(k)-Y^{(l,m)}}\right]}
+ \bm{\zeta}^{(l,m)}(k),
\end{eqnarray}
 is considered. $(X^{(l,m)}, Y^{(l,m)})$ represents the coordinates of
 sensor ($l,m$). Both process and observation noises are normally
 distributed with the observation noise $(\bm{\zeta}^{(l,m)}(k))$
 model assumed to be state dependent such that the bearing noise
 variance at sensor $(l, m)$ depends on the distance between the
 observer and target.
%--------------------------
A sensor network (Fig.~\ref{fig1}(a)) consisting of~$225$ static
sensors and $N_f$ = $9$ processing nodes scattered in a square region
of dimension ($1500 \times 1500$)m$^2$ is implemented. Our goal is to
evaluate the performance of the proposed CQ/dPCRLB, therefore, the
activated sensors are selected at random and limited to three sensors
per processing node. For simplicity, the sensors are distributed
uniformly with the processing node at the centre of its rectangular
($500 \times 500$)m neighbourhood. Each processing node communicates
only with its activated sensors within its ($500 \times 500$)m
neighbourhood and other processing nodes within a radius of $550$m.
%--------------------------

The objective of our Monte Carlo simulations is three folds. The first
objective is to validate the effectiveness of the conditional FIM
approximation (i.e., to replace the global auxiliary FIM with the
global conditional FIM) in Result~\ref{qdPCRLB}. Fig.~\ref{fig1}(b)
plots the conditional dPCRLB and CQ/dPCRLB with and without the
proposed global conditional FIM approximation. In each case, results
for both raw (bottom two plots) and quantized (top two plots)
observations are included. Within each set of plots in
Fig.~\ref{fig1}(b), the bounds virtually overlap verifying the
effectiveness of the global conditional FIM approximation. The second
objective is to compare the CQ/dPCRLB with quantized observations for
accuracy against the conditional dPCRLB computed from raw
observations~\cite{Arash:SPL}. Comparing bounds across the two sets of
plots in Fig.~\ref{fig1}(b), we note that the respective plots do not
overlap but are fairly close to each other. Despite using quantized
observations, the CQ/dPCRLB is a reasonable approximation of the
dPCRLB. Illustrated in Fig.~\ref{fig1}(c), the third objective is to
quantify the potential CQ/dPCRLB performance loss as a function of the
number of quantization levels. The CQ/dPCRLB approaches the dPCRLB as
the number of quantization levels are increased.
%The relative performance
%gain with an increased number of quantization levels decreases beyond
%an $8$-bit quantizer in our setup. The CQ/dPCRLB from an $8$-bit
%quantizer is a good approximation.

%OOOOOOOOOOOOOOOOOOOOOOOOOOOOOOOOOOOOOOOOOOOOOOOOOOOOOOOOO
\section{Conclusion} \label{sec:conclusion}
%OOOOOOOOOOOOOOOOOOOOOOOOOOOOOOOOOOOOOOOOOOOOOOOOOOOOOOOOO
The PCRLB has recently been proposed~\cite{Tharmarasa:2011} as an
effective selection criteria for decentralized sensor resource
management in large, geographically distributed sensor networks.
Existing decentralized algorithms for computing the PCRLB are
typically based on raw observations resulting in a significant
communication overhead. The letter derives the PCRLB for decentralized
estimators in sensor networks with quantized observations and tests it
in a bearing only tracking application.
%=================================================
% 		START-Edited By Arash
%=================================================
%OOOOOOOOOOOOOOOOOOOOOOOOOOOOOOOOOOOOOOOOOOOOOOOOOOOOOOOOO
\appendices
%OOOOOOOOOOOOOOOOOOOOOOOOOOOOOOOOOOOOOOOOOOOOOOOOOOOOOOOOO
%=================================================
\section{} \label{app:0}
%=================================================
%=================================================
% 		START-Edited By Arash
%=================================================
\noindent
Below, we highlight the relationship between the local accumulated conditional FIM $\bm{I}^{(l)}(0\colon \!k\!+\!1)$ and
local instantaneous conditional FIM $\bm{L}^{(l)}(k\!+\!1)$.
The local instantaneous conditional FIM $\bm{L}^{(l)}(k\!+\!1)$ is computed using either of the following three approaches: (i) Directly by inverting large matrix $\bm{I}^{(l)}(0\colon \!k+1)$;
 (ii)  Recursively as a function of the previous local instantaneous auxiliary FIM  $\bm{J}_{\text{AUX}}^{(l)}(k)$~\cite{Arash:SPL}, and; (iii) Recursively as a function of the previous local instantaneous conditional FIM $\bm{L}^{(l)}(k)$ presented in Result~\ref{Varsh}.
In approach (i), first the local accumulated conditional FIM $\bm{I}^{(l)}(0\colon \!k\!+\!1)$ is factorized as follows
\begin{eqnarray}
\bm{I}^{(l)}(0\colon \!k\!+\!1) &=&
 \left[\!\!
\begin{array}{cc}
[A^{11}(k\!+\!1)]^{(l)} & \!\![A^{12}(k\!+\!1)]^{(l)} \\
\left[A^{21}(k\!+\!1)\right]^{(l)} & \!\![A^{22}(k\!+\!1)]^{(l)} \\
\end{array} \!\!\right]\\
&=&  \mathbb{E}
\!\Bigg\{\!\!\!-\!\!\!
\begin{pmat}[{|}]
\!\!\Delta^{\x(0:k)}_{\x(0:k)}  & \Delta^{\x(k+1)}_{\x(0:k)}\!\! \cr\-
\!\!\Delta^{\x(0:k)}_{\x(k+1)}  & \Delta^{\x(k+1)}_{\x(k+1)}\!\! \cr
\end{pmat}
\!\!\log P_c^{(l)}(k\!+\!1)) \!\Bigg\}.\!\!\!\!\nonumber
\end{eqnarray}
Then, the local instantaneous conditional FIM $\bm{L}^{(l)}(k\!+\!1)$ associated with the estimate
$\hat{\x}(k\!+\!1)$  is obtained by taking the inverse of the $(n_x \times n_x)$
right-lower square block of $[\bm{I}^{(l)}(0\colon \!k\!+\!1)]^{-1}$ by applying the following matrix inversion Lemma~\cite{Zuo:2011}.
\begin{lemma}\label{MIL}
Matrix inversion Lemma:
\begin{eqnarray}
\left[
\begin{array}{cc}
\bm{A} & \bm{B}  \\
\bm{B}^T & \bm{C} \\
\end{array} \right]^{-1} =
\left[
\begin{array}{cc}
\bm{\Omega}^{-1} & -\bm{A}^{-1}\bm{B}\bm{\Phi}^{-1}  \\
-\bm{\Phi}^{-1}\bm{B}^{T}\bm{A}^{-1} & \bm{\Phi}^{-1} \\
\end{array} \right],
\end{eqnarray}
where subblocks $\{\bm{A}, \bm{B}, \bm{C}\}$ have conformable
dimensions, $\bm{\Omega} = \bm{A}-\bm{B}\bm{C}^{-1}\bm{B}^{T}$, and
$\bm{\Phi} = \bm{C}-\bm{B}^{T}\bm{A}^{-1}\bm{B}$.
\end{lemma}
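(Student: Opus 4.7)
The Matrix Inversion Lemma is a classical identity, so my plan is to establish it via block $\bm{L}\bm{D}\bm{U}$ decomposition rather than verifying the four blocks of the proposed inverse directly; the decomposition approach exposes $\bm{\Omega}$ and $\bm{\Phi}$ as the natural Schur complements and keeps the algebra short.

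First I would factor the original block matrix as a product of a unit block-lower-triangular matrix with $\bm{B}^T\bm{A}^{-1}$ in the $(2,1)$ position, a block-diagonal matrix with $\bm{A}$ and $\bm{\Phi} = \bm{C} - \bm{B}^T\bm{A}^{-1}\bm{B}$ on the diagonal, and a unit block-upper-triangular matrix with $\bm{A}^{-1}\bm{B}$ in the $(1,2)$ position. Verification that this product reproduces the given block matrix is a one-line block multiplication that uses only the definition of $\bm{\Phi}$.

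Next I would invert the product factor by factor in reverse order. Unit block-triangular matrices invert by negating their off-diagonal blocks, and the diagonal factor inverts block-wise. Multiplying the three inverses together immediately yields $\bm{\Phi}^{-1}$ in the $(2,2)$ slot, $-\bm{A}^{-1}\bm{B}\bm{\Phi}^{-1}$ in the $(1,2)$ slot, and $-\bm{\Phi}^{-1}\bm{B}^T\bm{A}^{-1}$ in the $(2,1)$ slot, which match the lemma's statement exactly.

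The main obstacle is the $(1,1)$ block, which drops out of the decomposition as $\bm{A}^{-1} + \bm{A}^{-1}\bm{B}\bm{\Phi}^{-1}\bm{B}^T\bm{A}^{-1}$ rather than the stated $\bm{\Omega}^{-1}$. To close the gap I would invoke the Woodbury identity, which gives $(\bm{A} - \bm{B}\bm{C}^{-1}\bm{B}^T)^{-1} = \bm{A}^{-1} + \bm{A}^{-1}\bm{B}\bm{\Phi}^{-1}\bm{B}^T\bm{A}^{-1}$ and therefore equates the two expressions. Along the way I would note that the argument presumes $\bm{A}$, $\bm{C}$, and both Schur complements $\bm{\Omega}$ and $\bm{\Phi}$ are invertible, an assumption implicit in the lemma's statement but worth flagging since it is precisely what allows both the LDU factorization and the Woodbury step to be carried out.
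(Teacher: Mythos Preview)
Your proof is correct and complete. The block $\bm{L}\bm{D}\bm{U}$ factorization cleanly produces the three off-diagonal and $(2,2)$ entries, and invoking the Woodbury identity to identify the $(1,1)$ entry $\bm{A}^{-1} + \bm{A}^{-1}\bm{B}\bm{\Phi}^{-1}\bm{B}^{T}\bm{A}^{-1}$ with $\bm{\Omega}^{-1} = (\bm{A} - \bm{B}\bm{C}^{-1}\bm{B}^{T})^{-1}$ is exactly the right closing step. Your remark about the implicit invertibility assumptions on $\bm{A}$, $\bm{C}$, $\bm{\Omega}$, and $\bm{\Phi}$ is also well placed.

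As for comparison with the paper: the paper does not actually prove Lemma~\ref{MIL}. It is stated as a standard result with a citation to~\cite{Zuo:2011} and then applied repeatedly (in Appendices~A, B, and~C) to extract the lower-right block of the inverse of a partitioned FIM. So there is no ``paper's own proof'' to compare against; your argument stands on its own and would be a perfectly acceptable self-contained justification were one desired. The only thing worth adding is that, for the paper's purposes, only the $(2,2)$ block $\bm{\Phi}^{-1}$ is ever used, so the Woodbury step for the $(1,1)$ block---while needed to establish the lemma as stated---is not actually exercised in the downstream applications.
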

\noindent
Based on Lemma~\ref{MIL},  the local instantaneous conditional FIM is given by
\begin{eqnarray}
\bm{L}^{(l)}(k\!+\!1)&=& [\bm{A}^{22}(k\!+\!1)]^{(l)}\\
&-&[\bm{A}^{21}(k\!+\!1)]^{(l)}[\bm{A}^{22}(k\!+\!1)]^{(l)^{-1}}[\bm{A}^{12}(k\!+\!1)]^{(l)}.\nonumber
\end{eqnarray}
which requires inversion of large matrix $[A^{11}(k\!+\!1)]^{(l)}$.
The report describes approach (iii) in more details in Section~III-B.
%=================================================
% 		END-Edited By Arash
%=================================================
%=================================================
\section{} \label{app:A}
%=================================================
%==========================================================
%						CSS/DUPF Consensus
%==========================================================
\begin{figure*}[t]
\normalsize
\setcounter{equation}{34}
\begin{eqnarray} \label{AppA:eq3}
\bm{I}^{(l)}(0\colon \!k\!+\!1) =
\begin{pmat}[{||}]
-\mathbb{E}_{P^{(l)}_c{(k+1)}}\Delta^{\x(0:k-1)}_{\x(0:k-1)} \log P^{(l)}_c(k)
& -\mathbb{E}_{P^{(l)}_c{(k+1)}}\Delta^{\x(k)}_{\x(0:k-1)} \log P^{(l)}_c(k)\!
& \bm{0} \cr\-
-\mathbb{E}_{P^{(l)}_c{(k+1)}}\Delta^{\x(0:k-1)}_{\x(k)} \log P^{(l)}_c(k)
& -\mathbb{E}_{P^{(l)}_c{(k+1)}}\Delta^{\x(k)}_{\x(k)} \log P^{(l)}_c(k)+[\bm{B}^{11}(k)]^{(l)}
&\! [\bm{B}^{12}(k)]^{(l)} \cr\-
\bm{0}
& [\bm{B}^{21}(k)]^{(l)}
& [\bm{B}^{22}(k)]^{(l)} \cr
\end{pmat},
\end{eqnarray}
where $P_c^{(l)}(k) = P(\x(0\colon \!k),\z^{(l)}(k)|\z^{(l)}(1\colon \!k\!-\!1))$.

\setcounter{equation}{29}
\hrulefill
\vspace*{4pt}
\end{figure*}
%==========================================================
%
Here Result~1 is derived. We also show that under a minor constraint,  the result in~\cite{Arash:SPL} reduces to Result~1, which is equivalent to replacing the local instantaneous auxiliary FIM $\JAUXlk$  by the local instantaneous conditional FIM $\bm{L}^{(l)}(k)$.
The rational for the approximation is included after the proof.
\begin{proof}[Proof of Result~\ref{Varsh}]
The conditional FIM given observations up to and including time $k\!-\!1$ is factorized as follows
\begin{eqnarray} \label{appA:eq1}
\bm{I}^{(l)}(0\colon \!k) &=&
 \left[\!\!
\begin{array}{cc}
[\FOA^{11}(k)]^{(l)} & \!\![\FOA^{12}(k)]^{(l)} \\
\left[\FOA^{21}(k)\right]^{(l)} & \!\![\FOA^{22}(k)]^{(l)} \\
\end{array} \!\!\right]\\
&=&  \mathbb{E}
\!\Bigg\{\!\!\!-\!\!\!
\begin{pmat}[{|}]
\!\!\Delta^{\x(0:k-1)}_{\x(0:k-1)}  & \Delta^{\x(k)}_{\x(0:k-1)}\!\! \cr\-
\!\!\Delta^{\x(0:k-1)}_{\x(k)}  & \Delta^{\x(k)}_{\x(k)}\!\! \cr
\end{pmat}
\!\!\log P_c^{(l)}(k) \!\Bigg\},\!\!\!\!\nonumber
\end{eqnarray}
where $P_c^{(l)}(k) = P(\x(0\colon \!k),\z^{(l)}(k)|\z^{(l)}(1\colon \!k\!-\!1))$.
Term $\bm{L}^{(l)}(k)$ is the inverse of the right lower block of $[\bm{I}^{(l)}(0\colon \!k)]^{-1}$ which is given by  (using the matrix inversion lemma)
\begin{equation}\label{AppA:eq8}
\bm{L}^{(l)}(k) = [\FOA^{11}(k)]^{(l)} -[\FOA^{21}(k)]^{(l)}[\FOA^{11}(k)]^{(l)^{-1}}[\FOA^{12}(k)]^{(l)}.
\end{equation}
For next iteration $k\!+\!1$, we have
\begin{eqnarray} \label{AppA:eq2}
\lefteqn{\bm{I}^{(l)}(0\colon \!k\!+\!1) =}\\
&& \mathbb{E}\Bigg\{\!\!\!\!-\!\!\!
\begin{pmat}[{||}]
\Delta^{\x(0:k-1)}_{\x(0:k-1)}
& \Delta^{\x(k)}_{\x(0:k-1)}\!
& \Delta^{\x(k+1)}_{\x(0:k-1)} \cr\-
\Delta^{\x(0:k-1)}_{\x(k)}
& \Delta^{\x(k)}_{\x(k)}\!
& \Delta^{\x(k+1)}_{\x(k)} \cr\-
\Delta^{\x(0:k-1)}_{\x(k+1)}
& \Delta^{\x(k)}_{\x(k+1)}\!
& \Delta^{\x(k+1)}_{\x(k+1)} \cr
\end{pmat}
\!\! \log P^{(l)}_c(k\!\!+\!\!1) \!\!\Bigg\},\nonumber
\end{eqnarray}
where $P_c^{(l)}(k\!+\!1) = P(\x(0\colon \!k\!+\!1),\z^{(l)}(k\!+\!1)|\z^{(l)}(1\colon \!k))$ which can be factorized as follows
\begin{eqnarray}\label{eq:axr}
\lefteqn{\!\!\!\!\!\!\!\!\!\!\!\!\!P\big(\x(0\colon \!k\!+\!1),\z^{(l)}(k\!+\!1)|\z^{(l)}(1\colon \!k)\big) \!\!=\!\!P\big(\z^{(l)}(k\!+\!1)|\x(k\!+\!1)\big)\nonumber}\\
&&\!\!\!\!\!\!\!\!\!\!\!\!\!\!\!\!\!\! \times P\big(\x(k\!+\!1)|\x(k)\big) \frac{P\big(\x(0\colon \!k),\z^{(l)}(k)|\z^{(l)}(1\colon \!k\!-\!1)\big)}{P\big(\z^{(l)}(k)|\z^{(l)}(1 \colon \!k\!-\!1)\big) }.
\end{eqnarray}
Taking  logarithm of Eq.~\eqref{eq:axr}.
\begin{eqnarray}
\lefteqn{\!\!\!\!\!\!\!\!\!\!\!\log P_c^{(l)}(k\!+\!1) =\log P\big(\z^{(l)}(k\!+\!1)|\x(k\!+\!1)\big)+ \log P_c^{(l)}(k)\nonumber}\\
&&\!\!\!\!\!\!\!\!\!\!\!\!\!\!\!\!  + \log P\big(\x(k\!+\!1)|\x(k)\big)\!-\! \log P\big(\z^{(l)}(k)|\z^{(l)}(1 \colon \!k\!-\!1)\big).
\end{eqnarray}
Therefore, Eq.~\eqref{AppA:eq2} reduces to Eq.~\eqref{AppA:eq3} on top of the next page
where $[\bm{B}^{11}(k)]^{(l)}$, $[\bm{B}^{12}(k)]^{(l)}$, $[\bm{B}^{21}(k)]^{(l)}$, and $[\bm{B}^{22}(k)]^{(l)}$ are given by Eqs.~\eqref{Qcond/d11}-\eqref{qqww3}.
The four blocks on the top left sub-matrix of Eq.~\eqref{AppA:eq3} are functions of $\z^{(l)}(k)$ which make them different from $[\bm{A}^{**}(k)]^{(l)}$ in Eq.~\eqref{appA:eq1}. In order to recursively compute $\bm{L}^{(l)}(k\!+\!1)$ from $\bm{L}^{(l)}(k)$, these four terms are approximated by their expectations with respect to $P(\z^{(l)}(k)|\z^{(l)}(1\colon \!k\!-\!1))$, i.e.,
\setcounter{equation}{35}
\begin{eqnarray}
\lefteqn{-\mathbb{E}_{P^{(l)}_c{(k+1)}}\Big\{\Delta^{\x(0:k-1)}_{\x(0:k-1)} \log P^{(l)}_c(k)\Big\} \nonumber}\\
\lefteqn{\approx-\mathbb{E}_{P(\z^{(l)}(k)|\z^{(l)}(1\colon \!k\!-\!1))}\Big\{\mathbb{E}_{P^{(l)}_c{(k+1)}}\Delta^{\x(0:k-1)}_{\x(0:k-1)} \log P^{(l)}_c(k)\Big\} \nonumber}\\
\lefteqn{=-\int P(\z^{(l)}(k)|\z^{(l)}(1\colon \!k\!-\!1)) P^{(l)}_c(k\!+\!1)\nonumber}\\
&&\!\!\!\!\!\! \times \Delta^{\x(0:k-1)}_{\x(0:k-1)} \log P^{(l)}_c(k)
 \bm{d}\x(0\colon \!k\!+\!1)\bm{d}\z^{(l)}(k\!+\!1)\bm{d}\z^{(l)}(\!k) \nonumber\\
&&\!\!\!\!\!\!\!\!\!\!\!\!\!\!= [\bm{A}^{11}(k)]^{(l)}.\label{eq:new1}
\end{eqnarray}
Similarly, it can be shown that
\begin{eqnarray}
\!\!\!\!\!\!\!\!-\mathbb{E}_{P^{(l)}_c{(k+1)}}\Big\{\Delta^{\x(k)}_{\x(0:k-1)} \log P^{(l)}_c(k)\Big\} &\!\!\approx\!\!& [\bm{A}^{12}(k)]^{(l)}.\\
\!\!\!\!\!\!\!\!-\mathbb{E}_{P^{(l)}_c{(k+1)}}\Big\{\Delta^{\x(k:k-1)}_{\x(k)} \log P^{(l)}_c(k)\Big\} &\!\!\approx\!\!& [\bm{A}^{21}(k)]^{(l)}.\\
\!\!\!\!\!\!\!\!-\mathbb{E}_{P^{(l)}_c{(k+1)}}\Big\{\Delta^{\x(k)}_{\x(k)} \log P^{(l)}_c(k)\Big\} &\!\!\approx\!\!& [\bm{A}^{22}(k)]^{(l)}.\label{eq:new12}
\end{eqnarray}
Finally, Eq.~\eqref{AppA:eq3} can be approximated as follows
\begin{eqnarray} \label{AppA:eq32}
\lefteqn{\bm{I}^{(l)}(0\colon \!k\!+\!1) \approx }\\
&&
\begin{pmat}[{||}]
 [\bm{A}^{11}(k)]^{(l)}
& [\bm{A}^{12}(k)]^{(l)}\!
& \bm{0} \cr\-
 [\bm{A}^{21}(k)]^{(l)}
&  [\bm{A}^{22}(k)]^{(l)}+[\bm{B}^{11}(k)]^{(l)}
&\! [\bm{B}^{12}(k)]^{(l)} \cr\-
\bm{0}
& [\bm{B}^{21}(k)]^{(l)}
& [\bm{B}^{22}(k)]^{(l)} \cr
\end{pmat}. \nonumber
\end{eqnarray}
Going back to complete the proof, we note
that the information sub-matrix $\bm{L}^{(l)}(k\!+\!1)$ is given
by the inverse of the right bottom ($n_x\times n_x$) block of $[\bm{I}^{(l)}(0\colon \!k)]^{-1}$
(corresponding to $[\bm{B}^{22}(k)]^{(l)}$ in Eq.~\eqref{AppA:eq32}), i.e.,
\begin{eqnarray} \label{AppA:eq5}
\lefteqn{\!\bm{L}^{(l)}(k\!\!+\!\!1) = [\bm{B}^{22}(k)]^{(l)}-\big[\bm{0}~ [\bm{B}^{21}(k)]^{(l)}\big]}\\
\lefteqn{\!\times \left[
\begin{array}{cc}
\!\!\![\bm{A}^{11}(k)]^{(l)} & [\bm{A}^{12}(k)]^{(l)}\!\!\! \\
\!\!\![\bm{A}^{21}(k)]^{(l)} & [\bm{A}^{22}(k)]^{(l)}\!\!+\!\![\bm{B}^{11}(k)]^{(l)}\!\!\! \\
\end{array}
\!\right]^{-1}
\left[
\begin{array}{c}
\!\!\!\bm{0} \!\!\!\\
\!\!\![\bm{B}^{12}(k)]^{(l)} \!\!\!\\
\end{array} \right]\nonumber}\\
\lefteqn{=\! [\bm{B}^{22}(k)]^{(l)}-[\bm{B}^{21}(k)]^{(l)}\Big(\! [\bm{A}^{22}(k)]^{(l)}}\label{AppA:eq6}
\\
&&\!\!\!\!\!\!\!\!\!\!\!\!\! -[\bm{A}^{21}(k)]^{(l)}[\bm{A}^{11}(k)]^{(l)^{-1}}\![\bm{A}^{12}(k)]^{(l)}
\!+\! [\bm{B}^{11}(k)]^{(l)}\!\Big)^{-1}\!\!\!\!\![\bm{B}^{12}(k)]^{(l)}\nonumber
\end{eqnarray}
Based on Eq.~\eqref{AppA:eq8},
the middle term in Eq.~\eqref{AppA:eq6} reduces to $\bm{L}^{(l)}(k)\!+\! [\bm{B}^{11}(k)]^{(l)}$ which by
substituting  in Eq.~\eqref{AppA:eq6} proves Result~1.
\end{proof}

Finally we note that Result 1 is valid with the following approximation:

The top left  four blocks of the accumulated conditional FIM given by Eq.~\eqref{AppA:eq3}
are replaced by their expectations with respect to $P(\z^{(l)}(k)|\z^{(l)}(1\colon \!k\!-\!1))$.

As shown above, this leads to Eqs.~\eqref{qqww1}-\eqref{qqww3} of Result~1. Comparing Eqs.~\eqref{qqww1}-\eqref{qqww3} with our earlier result~\cite{Arash:SPL}, we note that the instantaneous auxiliary FIM $\bm{J}^{(l)}_{\text{AUX}}(k)$ is replaced with the instantaneous conditional FIM $\bm{L}^{(l)}(k)$. Consequently, the CQ/dPCRLB updates the
conditional dPCRLB directly without the need of computing the
auxiliary FIM leading to significant communication savings (by a factor of 2).
%=================================================
\section{} \label{app:B}
%=================================================
%==========================================================
%						CSS/DUPF Consensus
%==========================================================
\begin{figure*}[t]
\normalsize
\setcounter{equation}{52}
\begin{eqnarray} \label{AppA:eq7}
\bm{I}_{\text{Q}}^{(\text{G})}(0\colon \!k\!+\!1) =
\begin{pmat}[{||}]
-\mathbb{E}_{P_{Q, c}{(k+1)}}\Delta^{\x(0:k-1)}_{\x(0:k-1)} \log P_{Q, c}(k)
& -\mathbb{E}_{P_{Q, c}{(k+1)}}\Delta^{\x(k)}_{\x(0:k-1)} \log P_{Q, c}(k)\!
& \bm{0} \cr\-
-\mathbb{E}_{P_{Q, c}{(k+1)}}\Delta^{\x(0:k-1)}_{\x(k)} \log P_{Q, c}(k)
& -\mathbb{E}_{P_{Q, c}{(k+1)}}\Delta^{\x(k)}_{\x(k)} \log P_{Q, c}(k)+\FOC^{11}(k)
&\! \FOC^{12}(k) \cr\-
\bm{0}
& \FOC^{21}(k)
& \FOC^{22}(k) \cr
\end{pmat},
\end{eqnarray}
where $P_{Q, c}(k\!+\!1)  \triangleq P(\x(0\!:\! k\!+\!1),\y(k\!+\!1)|\y(1\!:\!k))$.

\setcounter{equation}{42}
\hrulefill
\vspace*{4pt}
\end{figure*}
%==========================================================
%
Below, Result~\ref{qdPCRLB} is proved.
First, we derive Lemma~\ref{lemma/post/fac/CDpcrlb} which provides a factorization of the global quantized conditional
posterior distribution $P_{Q, c}(k+1)$ at iteration $k+1$ as a function of the local quantized conditional
posterior distribution $P^{(l)}_{Q, c}(k+1)$ at iteration $k+1$ and the global  quantized conditional
posterior distribution $P_{Q, c}(k)$ at iteration $k$.
\begin{lemma}\label{lemma/post/fac/CDpcrlb}
Assuming that the quantized observations conditioned on the state variables are
independent, the global posterior for a  network with $N_f$ processing nodes is
factorized as follows
\begin{eqnarray} \label{eq/post/fac/CDpcrlb}
\lefteqn{P_{Q, c}(k\!+\!1) \triangleq P(\x(0\!:\!k\!+\!1), \y(k\!+\!1)|\y(1\!:\!k)) }\\
&&\!\!\!\!\!\!\propto\frac{\prod_{l=1}^{N_f}P^{(l)}_{Q, c}(k\!+\!1)}
{\prod_{l=1}^{N_f} P\big(\x(k\!+\!1)|\y^{(l)}(1\!:\!k)\big)} P\big(\x(k\!+\!\!1)|\x(k)\big)  P_{Q, c}(k),\nonumber
\end{eqnarray}
where $\qquad P_{Q, c}(k) \triangleq P(\x(0\!:\! k),\y(k)|\y(1\!:\!k\!-\!1))$, \\
and $\quad ~P^{(l)}_{Q, c}(k\!+\!1)  \triangleq P(\x(0\!:\! k\!+\!1),\y^{(l)}(k\!+\!1)|\y^{(l)}(1\!:\!k))$.
\end{lemma}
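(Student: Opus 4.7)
The plan is to verify the claimed factorization by direct algebraic expansion of both sides, using Bayes' rule together with the two conditional independence structures in the model---the forward Markov property of the state dynamics and the hypothesized cross-node independence of the quantized observations given the current state---and then to show that after a standard distributed-fusion identity is applied, the two sides match up to an $\x$-independent normalization. This is all that is needed for the subsequent FIM computation in Appendix~C.

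First I would expand the left-hand side by chain rule and the Markov property as
\begin{equation*}
P_{Q,c}(k\!+\!1) = P(\y(k\!+\!1)|\x(k\!+\!1))\, P(\x(k\!+\!1)|\x(k))\, P(\x(0\!:\!k)|\y(1\!:\!k)),
\end{equation*}
and split the global likelihood into $\prod_l P(\y^{(l)}(k\!+\!1)|\x(k\!+\!1))$ using the stated cross-node independence. Next I would apply the same expansion at each processing node,
\begin{equation*}
P^{(l)}_{Q,c}(k\!+\!1) = P(\y^{(l)}(k\!+\!1)|\x(k\!+\!1))\, P(\x(k\!+\!1)|\x(k))\, P(\x(0\!:\!k)|\y^{(l)}(1\!:\!k)),
\end{equation*}
so that multiplying over $l = 1,\ldots,\NF$ reassembles the global likelihood on the one hand, but simultaneously introduces $\NF$ copies of the state transition and a product of local trajectory posteriors that must be combined.

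The third step is to invoke the Bar-Shalom style distributed fusion identity, which under conditional independence of the local observation records $\{\y^{(l)}(1\!:\!k)\}_l$ given $\x(0\!:\!k)$ yields $P(\x(0\!:\!k)|\y(1\!:\!k)) \propto \prod_l P(\x(0\!:\!k)|\y^{(l)}(1\!:\!k))/[P(\x(0\!:\!k))]^{\NF-1}$, together with the observation that $P_{Q,c}(k) \propto P(\x(0\!:\!k)|\y(1\!:\!k))$ as a function of $\x(0\!:\!k)$. The local one-step predictives $P(\x(k\!+\!1)|\y^{(l)}(1\!:\!k))$ in the denominator of the right-hand side then play a dual role: via the trajectory prediction decomposition $P(\x(0\!:\!k\!+\!1)|\y^{(l)}(1\!:\!k)) = P(\x(k\!+\!1)|\x(k))\, P(\x(0\!:\!k)|\y^{(l)}(1\!:\!k))$ they normalize the naive product of local trajectory posteriors into the global one, and they simultaneously soak up the surplus copies of $P(\x(k\!+\!1)|\x(k))$ produced in the second step.

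The hard part will be this final reconciliation. The Bar-Shalom identity naturally produces $[P(\x(0\!:\!k))]^{\NF-1}$ in its denominator, whereas the lemma features $\prod_l P(\x(k\!+\!1)|\y^{(l)}(1\!:\!k))$; converting between the two requires $\NF$ applications of the local prediction decomposition, and careful tracking of the various $\x$-independent normalization constants---the marginal terms $P(\y^{(l)}(1\!:\!k))$, $P(\y(1\!:\!k))$, and $P(\y(k\!+\!1)|\y(1\!:\!k))$---so that they collect into a single harmless proportionality. Once this bookkeeping is complete, direct substitution gives the claimed factorization, and the lemma is ready for use in the FIM derivation leading to Result~\ref{qdPCRLB}.
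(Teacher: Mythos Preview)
Your plan differs substantially from the paper's and ends up tangled in a step the paper never needs. The paper does not invoke a Bar-Shalom fusion identity at all. After writing $P_{Q,c}(k\!+\!1)$ in the Markov form of your first displayed equation, it simply observes that each local likelihood can itself be written as a ratio,
\[
P\big(\y^{(l)}(k\!+\!1)\,\big|\, \x(k\!+\!1)\big)
= \frac{P\big(\x(k\!+\!1),\y^{(l)}(k\!+\!1)\,\big|\,\y^{(l)}(1\!:\!k)\big)}
       {P\big(\x(k\!+\!1)\,\big|\,\y^{(l)}(1\!:\!k)\big)},
\]
and multiplies this over $l=1,\dots,\NF$. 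Combined with the trivial fact $P_{Q,c}(k)\propto P(\x(0\!:\!k)\mid\y(1\!:\!k))$ (since $P(\y(k)\mid\y(1\!:\!k\!-\!1))$ is state-independent), the right-hand side drops out directly. No surplus transition factors ever appear, and no prior-power $[P(\x(0\!:\!k))]^{\NF-1}$ has to be reconciled with anything.

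Your proposed reconciliation, by contrast, does not close as stated. The denominator $P(\x(k\!+\!1)\mid\y^{(l)}(1\!:\!k))$ in the lemma is a \emph{marginal} predictive depending on $\x(k\!+\!1)$ alone, namely $\int P(\x(k\!+\!1)\mid\x(k))\,P(\x(k)\mid\y^{(l)}(1\!:\!k))\,d\x(k)$; it cannot ``soak up'' the $\x(k)$-dependent factors $[P(\x(k\!+\!1)\mid\x(k))]^{\NF}$ generated by your product $\prod_l P^{(l)}_{Q,c}(k\!+\!1)$. The trajectory-prediction decomposition you cite pertains to $P(\x(0\!:\!k\!+\!1)\mid\y^{(l)}(1\!:\!k))$, not to this marginal. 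Following your expansion literally leaves the residual factor
\[
[P(\x(k\!+\!1)\mid\x(k))]^{\NF}\,
\frac{\prod_{l} P(\x(0\!:\!k)\mid\y^{(l)}(1\!:\!k))}{\prod_{l} P(\x(k\!+\!1)\mid\y^{(l)}(1\!:\!k))},
\]
which still depends on $\x(0\!:\!k\!+\!1)$ and is not an innocuous normalizer. The remedy is to skip the full Markov expansion of $P^{(l)}_{Q,c}(k\!+\!1)$ altogether and use the local-likelihood ratio identity above---this is the paper's key step, and it makes the local predictive in the denominator cancel exactly at the level of $\x(k\!+\!1)$.
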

\begin{proof}[Proof of Lemma~\ref{lemma/post/fac/CDpcrlb}]
Using the Markovian property
\begin{eqnarray} \label{eq/cond/fact1}
P_{Q, c}(k\!+\!1) &=& P\big(\y(k\!+\!1)| \x(k\!+\!1)\big) \\
&\times&P\big(\x(k\!+\!1)| \x(k)\big) P\big(\x(0\!:\!k) |\y(1\!:\!k)\big).\nonumber
\end{eqnarray}
Comparing Eq.~\eqref{eq/post/fac/CDpcrlb} with \eqref{eq/cond/fact1}, we need to prove: (i) $P\big(\y(k\!+\!1)| \x(k\!+\!1)\big) \propto \prod_{l=1}^{N_f}P^{(l)}_{Q, c}(k\!+\!1)/P\big(\x(k\!+\!1)|\y^{(l)}(1\!:\!k))$, and; (ii) $P_{Q, c}(k) \propto P\big(\x(0\!:\!k) |\y(1\!:\!k)\big)$.

Relationship (i):
Given the state variables, the observations are assumed to be independent as is the case in most Bayesian estimators.
Then,  the first term on the right hand side (RHS) of~(\ref{eq/cond/fact1})~is  given by
\begin{eqnarray} \label{eq/cond/fact2}
P\big(\y(k\!+\!1)| \x(k\!+\!1) \big) = \prod_{l=1}^{N_f} P\big(\y^{(l)}(k\!+\!1)| \x(k\!+\!1)\big).
\end{eqnarray}
We also factorize the local conditional distribution at node $l$, for ($1\leq l \leq N_f$), as follows
\begin{eqnarray} \label{eq:46}
\lefteqn{P\big(\x(k\!+\!1),\!\y^{(l)}(k\!+\!1)|\y^{(l)}(1\!:\!k)\big) }\\
&&=P\big(\y^{(l)}(k\!+\!1)|\x(k\!+\!1)\big)P\big(\x(k\!+\!1)|\y^{(l)}(1\!:\!k)\big).\nonumber
\end{eqnarray}
In terms of the local likelihood $P(\y^{(l)}(k+1)|\x(k+1))$, Eq.~\eqref{eq:46} can be expressed as follows
\begin{equation} \label{eq47new}
P\big(\y^{(l)}(k\!+\!1)|\x(k\!+\!1)\big) =\frac{P\big(\x(k\!+\!1),\!\y^{(l)}(k\!+\!1)|\y^{(l)}(1\!:\!k)\big)}{P\big(\x(k\!+\!1)|\y^{(l)}(1\!:\!k)\big)}.
\end{equation}
Substituting Eq.~\eqref{eq47new} in Eq.~\eqref{eq/cond/fact2}, we have
\begin{eqnarray}
P\big(\y(k\!+\!1)| \x(k\!+\!1) \big) =\prod_{l=1}^{N_f}\!\!
 \frac{P\left(\x(k\!+\!1),\!\y^{(l)}(k\!+\!1)|\y^{(l)}(1\!:\!k)\right)}
{P\left(\x(k\!+\!1)|\y^{(l)}(1\!:\!k)\right)},\nonumber
\end{eqnarray}
which proves Relation~(i).

\noindent
Relationship~(ii): Term $ P_{Q, c}(k)$ can be factorized as follows
\begin{eqnarray} \label{eq/cond/fact3}
P_{Q, c}(k) = P\big(\x(0\!:\!k) |\y(1\!:\!k)\big)P\big(\y(k) |\y(1\!:\!k\!-\!1)\big).
\end{eqnarray}
Since $P\big(\y(k) |\y(1\!:\!k\!-\!1)\big)$ is independent of the state
variables, Eq.~\eqref{eq/cond/fact3} can be expressed as follows
\begin{eqnarray}
P_{Q, c}(k) \propto P\big(\x(0\!:\!k) |\y(1\!:\!k)\big),
\end{eqnarray}
which proves Relation~(ii).

\noindent
This completes the proof for Lemma~1.
\end{proof}
\begin{proof}[Proof of Result~\ref{qdPCRLB}]
Given the quantized observations up to and including time $k$, the global accumulated conditional FIM  can be decomposed as follows
\begin{eqnarray} \label{cpp}
\bm{I}_{\text{Q}}^{(\text{G})}(0\colon\!k)  &\!\!=\!\!& \mathbb{E}
\Bigg\{\!\!\!-\!\!\!
\begin{pmat}[{|}]
\!\!\Delta^{\x(0:k-1)}_{\x(0:k-1)}  & \Delta^{\x(k)}_{\x(0:k-1)}\!\! \cr\-
\!\!\Delta^{\x(0:k-1)}_{\x(k)}  & \Delta^{\x(k)}_{\x(k)}\!\! \cr
\end{pmat}
\!\!\log P_{Q, c}(k)  \!\Bigg\}\nonumber\\
&\!\!\triangleq\!\!&
 \left[\!\!
\begin{array}{cc}
\FOE^{11}(k) & \!\!\!\!\FOE^{12}(k)  \\
\FOE^{21}(k) & \!\!\!\!\FOE^{22}(k) \\
\end{array} \!\!\right].
\end{eqnarray}
As stated previously in Appendix~A, the  instantaneous conditional FIM $\bm{L}_{\text{Q}}^{(\text{G})}(k)$ is obtained by taking the inverse of the right lower block of $[\bm{I}_{\text{Q}}^{(\text{G})}(0\colon \!k)]^{-1}$. Using Lemma~\ref{MIL} we get
\begin{equation}\label{AppA:eqn4}
\GQL(k) =\FOE^{11}(k) -\FOE^{21}(k)[\FOE^{11}(k)]^{-1}\FOE^{12}(k).
\end{equation}
For iteration $k+1$, we
decompose $\x(0\!:\!k\!+\!1)=[\x^T(0\!:\!k\!-\!1), \x^T(k), \x^T(k\!+\!1)]^T$.
As for Eq.~\eqref{cpp},
 the global accumulated conditional FIM for iteration $k\!+\!1$ is
then given by
\begin{eqnarray}\label{eq49}
\lefteqn{\bm{I}_{\text{Q}}^{(\text{G})}(0:k+1)}\\
\!\!&\!\!\!\!\!\!=\!\!\!\!\!\!&\!\! \mathbb{E}\Bigg\{\!\!-\!\!\!
\begin{pmat}[{||}]
\Delta^{\x(0:k-1)}_{\x(0:k-1)}
& \Delta^{\x(k)}_{\x(0:k-1)}\!
& \Delta^{\x(k+1)}_{\x(0:k-1)} \cr\-
\Delta^{\x(0:k-1)}_{\x(k)}
& \Delta^{\x(k)}_{\x(k)}\!
& \Delta^{\x(k+1)}_{\x(k)} \cr\-
\Delta^{\x(0:k-1)}_{\x(k+1)}
& \Delta^{\x(k)}_{\x(k+1)}\!
& \Delta^{\x(k+1)}_{\x(k+1)} \cr
\end{pmat}
\!\! \log P_{Q,c}(k+1) \!\Bigg\}.\nonumber
\end{eqnarray}
Using Lemma~\ref{lemma/post/fac/CDpcrlb}, Eq.~\eqref{eq49} reduces to Eq.~\eqref{AppA:eq7} given at the top of the page.
Similar to our discussion in Appendix~B,
the four blocks on the top left sub-matrix of Eq.~\eqref{AppA:eq7} are functions of $\y(k)$, which make them different from $\bm{E}^{**}(k)$ in Eq.~\eqref{cpp}.  In order to recursively compute $\bm{L}^{(\text{G})}_{\text{Q}}(k\!+\!1)$  from $\bm{L}^{(\text{G})}_{\text{Q}}(k)$, these four blocks are approximated by taking their expectations with respect to $P(\y(k)|\y(1\colon \!k\!-\!1))$ resulting in
\setcounter{equation}{53}
\begin{eqnarray} \label{fo/J/exp}
\lefteqn{\bm{I}_{\text{Q}}^{(\text{G})}(0:k+1) \nonumber}\\
\!\!&\!\!\!\!\!\!\approx\!\!\!\!\!\!&\!\!
\left[\!\!
\begin{array}{ccc}
\FOE^{11}(k) & \FOE^{12}(k) & \bm{0} \\
\FOE^{21}(k) & \FOE^{22}(k)+ \QC^{11}(k)& \QC^{12}(k) \\
\bm{0} & \QC^{21}(k) & \QC^{22}(k) \\
\end{array} \right],\label{eq:new45}
\end{eqnarray}
where block $\bm{0}$ denotes a block of all zeros.
Terms $\QC^{11}(k)$, $\QC^{12}(k)$ and $\QC^{21}(k)$ were defined previously
 in Eqs.~\eqref{cond/Qd12}-\eqref{eq:c22/q}.
Next, using Lemma~\ref{lemma/post/fac/CDpcrlb}, term $\QC^{22}(k)\!=\! \mathbb{E}\{\!-\!\Delta^{\x(k+1)}_{\x(k+1)}\log P_{Q,c}(k+1)\}$ in Eq.~\eqref{fo/J/exp} is expressed as
\begin{eqnarray} \label{cond/c22-v2}
\lefteqn{
\QC^{22}(k)
\!\!=\!\!   \mathbb{E}_{P_{Q,c}(k+1)}\Big\{\!\!-\!\!\Delta^{\x(k+1)}_{\x(k+1)}\log\big(P\left(\x(k\!+\!1)|\x(k)\right)\big) \Big\}+\nonumber}\\
&&\!\!\!\!\!\!\!\!\!\!\!\sum_{l=1}^{N_f} \!\mathbb{E}_{P_{Q,c}(k+1)}\!\Big\{\!\!\!-\!\!\Delta^{\x(k+1)}_{\x(k+1)}\!\log\!\big(\!P( \x(k\!\!+\!\!1\!), \y^{(l)}(k\!\!+\!\!1\!))|\y^{(l)}(1\!\!:\!\!k))\big)\!\Big\}\nonumber \\
&&\!\!\!\!\!\!\!\!\!\!\! - \sum_{l=1}^{N_f} \mathbb{E}_{P_{Q,c}(k+1)}\Big\{\!\!-\!\!\Delta^{\x(k+1)}_{\x(k+1)}\log\big( P( \x(k\!+\!1)|\y^{(l)}(1\!\!:\!k))\big)\Big\}\!.
\end{eqnarray}
Finally,  we note that the two summation terms in Eq.~\eqref{cond/c22-v2} are  individual sums of the local instantaneous conditional FIMs at iteration $k\!+\!1$, i.e.,
\begin{eqnarray}
&&\!\!\!\!\!\!\!\!\!\!\!\sum_{l=1}^{N_f} \!\mathbb{E}_{P_{Q,c}(k+1)}\!\Big\{\!\!\!-\!\!\Delta^{\x(k+1)}_{\x(k+1)}\!\log\!\big(\!P( \x(k\!\!+\!\!1\!), \y^{(l)}(k\!\!+\!\!1\!))|\y^{(l)}(1\!\!:\!\!k))\big)\!\Big\}  \nonumber\\
&&\approx\sum_{l=1}^{N_f}\bm{L}_{\text{Q}}^{(l)}(k\!+\!1)\nonumber
\end{eqnarray}
and
\begin{eqnarray}
&&\!\!\!\!\!\!\!\!\!\!\!\sum_{l=1}^{N_f} \mathbb{E}_{P_{Q,c}(k+1)}\Big\{\!\!-\!\!\Delta^{\x(k+1)}_{\x(k+1)}\log\big( P( \x(k\!+\!1)|\y^{(l)}(1\!\!:\!k))\big)\Big\} \nonumber\\
&&\approx \sum_{l=1}^{N_f}\bm{L}_{\text{Q}}^{(l)}(k\!+\!1|k).\nonumber
\end{eqnarray}
Term $\QC^{22}(k)$ in Eq.~\eqref{cond/c22-v2}, therefore, reduces to
\begin{eqnarray}
\bm{C}_{\text{Q}}^{22}(k) &\!\approx\!& \sum_{l=1}^{N_f} \LLQk-\sum_{l=1}^{N_f} \LLQ(k+1|k) \nonumber\\
&+& \mathbb{E} \big\{ \!\!-\!\!\Delta^{\x(k+1)}_{\x(k+1)} \log P\big(\x(k+1)|\x(k)\big)\big\}.  \nonumber
\end{eqnarray}
 The information sub-matrix $\GQL(k+1)$
can then be calculated as the inverse of the right lower ($n_x\times n_x$)
sub-matrix of $[\bm{I}_{\text{Q}}^{(G)}(0\colon\!k+1)]^{-1}$  (Eq.~\eqref{fo/J/exp}) as follows
\begin{eqnarray} \label{ext.crlb.12}
\lefteqn{\!\!\!\GQL(k+1) \approx \QC^{22}(k)-}\\
&&\!\!\!\!\!\!\!\!\!\!\!\!\!\!\!\big[\bm{0} \quad \QC^{21}(k)\big]  \left[
\begin{array}{cc}
\FOE^{11}(k) & \FOE^{12}(k) \\
\FOE^{21}(k) & \FOE^{22}(k)+\QC^{11}(k) \\
\end{array}
\right]^{-1}
\!\!\left[
\begin{array}{c}
\bm{0}\!\! \\
\QC^{12}(k)\!\! \\
\end{array} \right]. \nonumber
\end{eqnarray}
Simplifying Eq.~\eqref{ext.crlb.12}, we get
\begin{eqnarray}
\lefteqn{\GQL(k+1) \approx\nonumber}\\
&&\QC^{22}(k)-\QC^{21}(k)\big(\GQL(k)+\QC^{11}(k)\big)^{-1}\QC^{12}(k),~~~~~~~~ \nonumber
\end{eqnarray}
where Eq.~\eqref{AppA:eqn4} has been used to obtain the final result.
This completes the proof for Result~2.
\end{proof}
%=================================================
% 		END-Edited By Arash
%=================================================
%OOOOOOOOOOOOOOOOOOOOOOOOOOOOOOOOOOOOOOOOOOOOOOOOOOOOOOOOO

%OOOOOOOOOOOOOOOOOOOOOOOOOOOOOOOOOOOOOOOOOOOOOOOOOOOOOOOOO
\end{document}